\newtheorem{theorem}{Theorem}%[section]
\theoremstyle{definition}
\theoremstyle{remark}
\def\numberlikeadb{\global\def\theequation{\thesection.\arabic{equation}}}
\begin{document}

\title[]{A graphical exploration of the relationship between parasite aggregation indices} 

\author[]{R. McVinish}
\address{School of Mathematics and Physics, University of Queensland}
\email{r.mcvinish@uq.edu.au}
\author[]{R.J.G. Lester}
\address{School of Biological Sciences, University of Queensland}

\keywords{Aggregation, Gini index, Hoover index, Lorenz order, Negative binomial distribution, Prevalence.}

\begin{abstract}
    The level of aggregation in parasite populations is frequently incorporated into ecological studies.  It is measured in various ways including variance-to-mean ratio, mean crowding, the $k$ parameter of the negative binomial distribution and indices based on the Lorenz curve such as the Gini index (Poulin's D) and the Hoover index. Assuming the frequency distributions follow a negative binomial, we use contour plots to clarify the relationships between aggregation indices, mean abundance and prevalence. The contour plots highlight the nonlinear nature of the relationships between these measures and suggest that correlations are not a suitable summary of these relationships.
\end{abstract}

\maketitle

\section{Introduction}

Investigations into parasite population dynamics frequently require an indicator of the level of aggregation in the parasite population \citep{Tinsley:2020,KTCPGA:2022}. As the concept of aggregation in parasites is poorly defined \citep{Pielou:77, ML:2020}, aggregation has been measured in various ways. Commonly used indices include prevalence, the Variance-to-Mean Ratio (VMR), and the $k$ parameter of the negative binomial distribution. Closely related to VMR and $k$ are mean crowding and patchiness \citep{Lloyd:67} which can be seen as more direct measures of the competitive experience of parasites within a host \citep{WFL:2018}. Two other indices are derived from the Lorenz curve \citep{Lorenz:1905}, the most widely accepted quantification of inequality. \citet{Poulin:93} proposed using the Gini index \citep{Gini:1914}, which has since become widely used in parasitology \citep{Rod:2021, BB:2023, Matos:2023}.  The Hoover index (aka Pietra index) has more recently been proposed to measure parasite aggregation \citep{ML:2020, LB:2021}.  

This paper clarifies and extends our previous work on aggregation. It was stimulated by a recent paper by \citet{MPF:2023} which correlated aggregation indices with mean abundance and prevalence using simulated data.  We present a more accurate representation using ‘contour plots’, calculated directly from the parameters of the negative binomial distributions.  The plots provide a simple and more insightful way to comprehend the relationships.

\section{Contour plots}

The contours show combinations of two indices, specified on the vertical and horizontal axes, that give rise to similar values of the third index. Contour plots, developed in the 16th century \citep{MM:2017}, are widely used in other disciplines but rarely in parasitology (e.g. \citet{KTCPGA:2022}).

Our analysis assumed that parasite burden is adequately modelled by a negative binomial distribution \citep{Crofton:71, SGD:98, Poulin:2011, MPF:2023}. Following the typical practice in parasitology, we parameterised the negative binomial distribution in terms of mean abundance, $m$, and the parameter $k$ which controls the shape of the distribution. We did not make any assumption on the distribution of $m$ and $k$. We used the range of values for $m$ and $k$ suggested by the extensive data of \citet{SD:95}. Their values for $m$, $k$ and prevalence are superimposed on several of the contour plots as dot points.  

To construct a contour plot of an aggregation index against $m$ and $k$, we expressed the aggregation index as a function of $m$ and $k$. The population values of several indices can be expressed simply in terms of $m$ and $k$: \[\text{prevalence} = 1 - (k/(k+m))^{k},\] $\text{VMR} = 1 + m/k$ , $\text{mean crowding} = m +m/k$, and $ \text{patchiness} = 1 + 1/k$. The Gini and Hoover indices lack simple expressions in terms of $m$ and $k$ however, they can still be evaluated numerically. The Hoover index can be expressed in terms of $m$ and $k$ by applying \citet[Lemma 5.3.3]{Arnold:87},
\[
H=F(m;k,m)-F(m-1;k+1,m+m/k),
\]
where $F(x; m, k)$ is the cumulative distribution function of the negative binomial distribution with $k$ and mean $m$ evaluated at $x$.  Further details are given in the Appendix. The cumulative distribution function of the negative binomial distribution, $F$, is available in statistical packages such as R \citep{RCT:2023}. The Gini index can be expressed as
\[
G= \left(1+\frac{m}{k}\right)\, _{2}F_1 \left(k+1,\frac{1}{2},2;-4 \frac{m}{k} \left(1+\frac{m}{k}\right) \right),
\]
where $_{2}F_{1}$ is the Gaussian hypergeometric function \citep[equation 2.12]{Ramasubban:58}. This can be evaluated in R using the hypergeo package \citep{Hankin:2015}. Calculating indices directly from the parameters of the negative binomial distribution rather than using simulated data obviates the need to consider the uncertainty of estimates and the effects of different sample sizes. 

We also employed contour plots to examine the relationship between aggregation indices, $k$, and prevalence. This required first solving the equation \[\text{prevalence} = 1 – (k/(k+m))^{k}\] in terms of $k$ for each pair of $m$ and prevalence in the contour plot. This equation has a unique solution if $ m + \ln(1 - \text{prevalence}) > 0$. On the other hand, if $m + \ln(1 - \text{prevalence}) < 0$, there is no solution to the equation. The solution was found numerically using the uniroot function in R. The expressions for the aggregation indices in terms of $m$ and $k$ is then used to construct the contour plot. Regions of $m$ and prevalence that are inconsistent with a negative binomial distribution are represented as white in the contour plot.  

All contour plots were produced in R using the ggplot2 package \citep{Wickham:2016}. The values for $m$ and $k$ reported in \citet{SD:95} were heavily skewed and spanned several orders of magnitude with $m$ ranging between 0.1 and 5200 and $k$ ranging between 0.001 and 16.5. To make the plots clearer, log scaling has been applied to these variables. 

\section{Relationship between mean abundance, $k$, and prevalence}

The relationship between $m$, $k$, and prevalence in wild parasite populations has been examined by several authors with conflicting results \citep{Pennycuick:71, Scott:87, Poulin:93, SD:95, KTCPGA:2022}. While the expression of prevalence in terms of $m$ and $k$ is sufficiently simple to analyse, it is still instructive to construct the contour plot (Fig. \ref{fig:1} left).  In it, each colour represents a region of values of $m$ and $k$ that give rise to similar values of prevalence. 

We see that prevalence is increasing in both $m$ and $k$ leading to contours that are roughly L-shaped on the range of $m$ and $k$ plotted so prevalence is small when either $m$ or $k$ are small, and prevalence is large when both $m$ and $k$ are large. The contours also show that there is a non-linear relationship between $m$ and $k$ when prevalence is considered fixed. The contours become almost parallel to the horizontal axis as $k$ increases, a consequence of $\lim_{k\to\infty} \text{prevalence} = 1 - e^{-m}$. On the other hand, the contours continue to move left as $m$ increases, a consequence of $\lim_{m\to\infty} \text{prevalence} = 1$.  The contour plot shows that the rate at which prevalence approaches one as $m$ increases is slow when $k$ is small.

If we restrict our attention to a single-coloured band, i.e. those values of $m$ and $k$ giving rise to similar values of prevalence, we see that, after controlling for prevalence, there is a negative relationship between $m$ and $k$.  This relationship is forced by the negative binomial distribution, so it will hold true in natural systems to the extent that those systems are well modelled by the negative binomial distribution. The different widths of the contour lines show the non-linearity of the relationship between $m$, $k$ and prevalence.

The dot points represent estimates of $m$ and $k$ from the 269 parasite-host systems reported in \citet{SD:95}. Although several parasite-host systems lie in a region of very high prevalence (both $m$ and $k$ large) or very small prevalence (either $m$ or $k$ small), many others occupy a region of the parameter space where a moderate change in the parameter values would result in a significant change in prevalence assuming a negative binomial distribution.

As Shaw \& Dobson reported prevalences in their review, it is possible to compare these with the prevalence values implied by the negative binomial distribution (Fig. \ref{fig:1} right). In general, there is good agreement; most points within a given contour having the same colour.  This demonstrates the accuracy of the contour plots to interpret relationships in real life situations. The few points where the observed prevalences don’t agree with that determined by the negative binomial could be because these distributions did not conform to a negative binomial.  

\begin{figure}
    \centering
    \includegraphics[width=\textwidth]{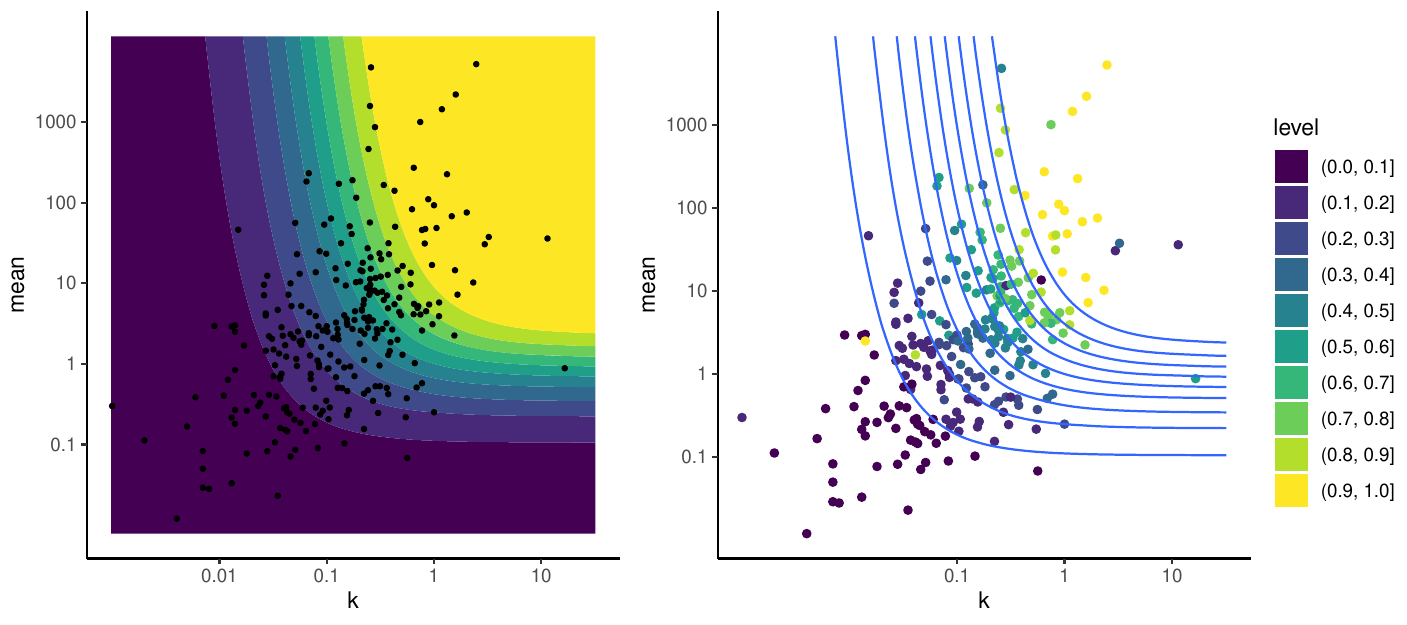}
    \caption{ (Left) Contour plot showing prevalence levels from zero to 1 for values of $m$ and $k$. Each colour band represents a region of values of $m$ and $k$ that give rise to similar values of prevalence. The axes for both the $m$ and $k$ are on the log scale. Dot points are actual data from \citet{SD:95}. (Right) Scatter plot of mean against $k$ with prevalence data from Shaw \& Dobson.  Lighter colours indicate higher prevalence level.  The lines are from the contour plot (left).  In general, there is good agreement between the observed prevalences and the contours.  Exceptions may be from samples that did not conform to a negative binomial.}
    \label{fig:1}
\end{figure}

\section{Relationship of Hoover \& Gini indices with mean abundance, $k$, and prevalence}

Contour plots of the Hoover index and Gini index as functions of $m$ and $k$ are shown in Fig \ref{fig:2} left and right. The contour plots are qualitatively very similar and share some similarities with the contour plot of prevalence (Fig. \ref{fig:1}). Both Hoover and Gini indices decrease in both $m$ and $k$, taking values close to one when either $m$ or $k$ were small, and taking values close to zero when both $m$ and $k$ were large. The contours are L-shaped becoming almost parallel to the horizontal axis as $m$ increases and almost parallel to the vertical axis as $k$ increases.

\begin{figure}
    \centering
    \includegraphics[width=\textwidth]{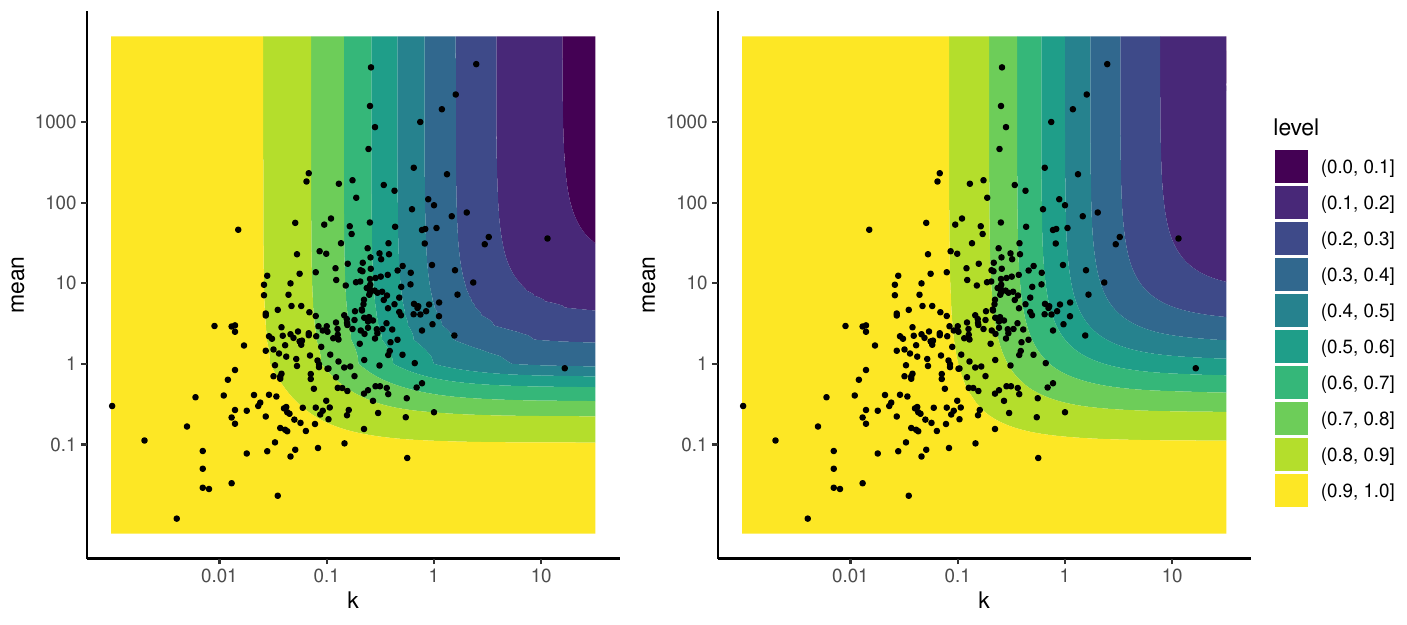}
    \caption{Contour plots of Hoover index (left) and Gini index (right) as functions of $m$ and $k$. Contour lines are shown from 0 to 1, i.e. from least aggregated (darkest band) to most aggregated (yellow band). Comparing the graphs, the Gini index is always larger than the Hoover index and has a smaller range over the region of values for $m$ and $k$ observed in wild populations (dot points)}
    \label{fig:2}
\end{figure}

The plots show both indices display some stability over a wide range of $m$ and $k$. Restricting our attention to the Hoover index (Figure \ref{fig:2} left), we see that for $m>5$ the value of the index is largely determined by the size of $k$. For $ m< 5$ the value is less affected by $k$ but more affected by $m$, as indicated by the number of contours crossed as $m$ decreases. For example, starting from $k=1$ and $m=6$, as $m$ decreases the value of the index increases quickly crossing several contours from 0.4 to 1. On the other hand, when $m$ increases from the same point (1,6) the index stays in the same colour band and there is little change in the Hoover value (0.4 to 0.5). For many of the parasite-host systems reported in \citet{SD:95}, shown on the figure as dot points, an increase in $m$, that is moving the points vertically on the contour plot, does not appear to impact the Hoover index since the point would remain in the same-coloured region. On the other hand, in many of the samples, a moderate change in $k$, that is moving the point horizontally, has a large impact on the Hoover index. Similar behaviour is observed in the contour plot of the Gini index (Fig. \ref{fig:2} right), with the Gini index appearing to be even less affected by changes in $m$.

There are two noticeable differences between the contour plots for the Hoover and Gini indices (Fig. \ref{fig:2}). Firstly, the Gini index is always larger than the Hoover index \citep{Taguchi:68} \citep[Section 5.7]{Arnold:87}. This causes the Gini index to have a smaller range over the region of values for $m$ and $k$ observed in wild populations. Specifically, for the values of $m$ and $k$ reported in \citet{SD:95}, the Gini index exceeds 0.9 in 42\% (113/269) of cases compared to 20\% (54/269) of cases exceeding 0.9 Hoover index. Second, the contours of the Hoover index are not smooth, unlike those of the Gini index. The bumps that occur on the contours of the Hoover index occur at integer values of the mean, the most prominent occurring when the mean is 1. These bumps quickly become much less noticeable as the mean increases.

The contour plots of the Gini and Hoover indices exhibit greater differences when considered as functions of $m$ and prevalence (Fig. \ref{fig:3}). First, unlike the Gini index, the contour lines of the Hoover index are parallel to the vertical axis when $m$ is less than one. As noted by \citet{MPF:2023}, when all infected hosts harbour infrapopulations larger than or equal to the overall mean, the Hoover index is equal to one minus prevalence. For the negative binomial distribution, this implies the Hoover index is equal to one minus prevalence when the mean is less than or equal to one. Second, there is less variability in the widths of the contours for the Hoover index compared to the Gini index. This suggests the dependence of the Hoover index on prevalence is more regular. At a given $m$, a change of 0.1 in the prevalence will have roughly the same effect on the value of the Hoover index, regardless of the initial value of prevalence. In contrast, much of the contour plot of the Gini index is coloured yellow, corresponding to values greater than 0.9. Values of the Gini index less than 0.6 are restricted to small region of the plot, indicating that small changes in prevalence in that region will result in a large change in the Gini index.

\begin{figure}
    \centering
    \includegraphics[width=\textwidth]{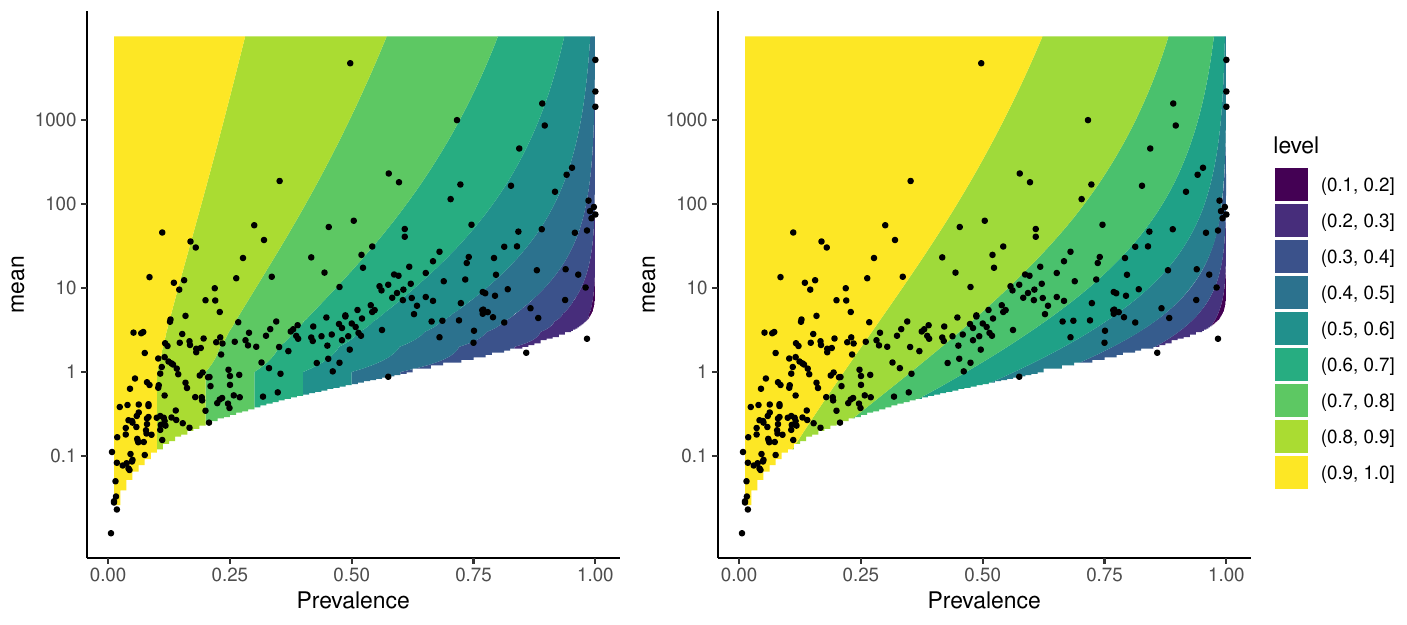}
    \caption{Contour plot of Hoover index (left) and Gini index (right) as functions of $m$ and prevalence. The contour lines of the Hoover index are parallel to the vertical axis when $m$ is less than one. The Gini index is less constrained.  However, with means above one, the Hoover index is more evenly spread than the Gini index. }
    \label{fig:3}
\end{figure}

The parasite data from Shaw and Dobson were taken from five taxonomic groups.  The data, divided into taxa, were superimposed on the plots of $m$ vs $k$ with contour lines of prevalence and Hoover index.  They did not show any obvious grouping.

\section{Lorenz order and the negative binomial distribution}

Both the Hoover and Gini indices are seen in Figure \ref{fig:2} to be decreasing functions of $m$ and $k$, as is 1 - prevalence (Figure \ref{fig:1}). This behaviour is due to how these indices relate to the Lorenz curve and how the parameters $m$ and $k$ affect the Lorenz curve of the negative binomial distribution.

The Lorenz curve of a distribution with cumulative distribution function $F$ is given by 
\[
L(u) =  \frac{\int^{u}_{0} F^{-1} (y)\, dy}{m}, \quad u \in [0,1],
\]
where $m$ is the mean of the distribution and $F^{-1}(x) = \sup_{x} \{x:F(x)\leq y\}$ for $y\in (0,1)$ (Gastwirth, 1971). In our context, the Lorenz curve describes the proportion $u$ of the host population that is infected with a proportion $L(u)$ of the parasite population. When all hosts have the same parasite burden, the Lorenz curve is given by $L(u) = u$ for all $u$ in [0,1]. This is called the egalitarian line. Several indices can be defined in terms of the Lorenz curve. Specifically, the Gini index is twice the area between the Lorenz curve and the egalitarian line, and the Hoover index is the greatest vertical distance between the Lorenz curve and the egalitarian line. Even $1 -\text{prevalence}$ can be viewed as the largest value of $u$ such that $L(u) = 0$.
The Lorenz curve induces a partial ordering of distributions. Assume $F_{A}$ and $F_{B}$ are two distribution functions with finite means. If the Lorenz curve of $F_{A}$ is greater than the Lorenz curve of $F_{B}$ for all $u$, then we say that $F_{A}$ is smaller than $F_{B}$ in the Lorenz order and write $F_{A} \leq F_{B}$. This ordering corresponds to the notion of aggregation put forward by \citet{Poulin:93,ML:2020}. From their connections with the Lorenz curve, we see that if $F_{A} \leq F_{B}$, then the Gini and Hoover indices as well as $1-\text{prevalence}$ will be smaller for $F_{A}$ than for $F_{B}$.

The following result shows that the negative binomial distribution decreases in the Lorenz order as $m$ increases and as $k$ increases. 

\begin{theorem} \label{Thm:lorenz1}
 Let $\mathsf{NB}(m,k)$ denote the negative binomial distribution with parameters $m$ and $k$. If $m_1 < m_2$, then 
\[
\mathsf{NB}(k,m_{2}) \leq_{L} \mathsf{NB}(k, m_{1}).
\] 
If $k_1 < k_2$, then 
\[
\mathsf{NB}(k_{2},m) \leq_{L} \mathsf{NB}(k_{1},m).
\]
\end{theorem}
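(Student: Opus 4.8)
\emph{Proof strategy.} The plan is to use the classical Poisson--Gamma mixture representation: $\mathsf{NB}(k,m)\stackrel{d}{=}\mathrm{Poisson}(\Lambda)$, where $\Lambda$ has a Gamma distribution with shape $k$ and scale $m/k$ (so $\Lambda$ has mean $m$ and variance $m^2/k$). I will also use two standard facts about the Lorenz order. First, it is scale invariant: $cX$ and $X$ have the same Lorenz curve for every $c>0$. Second, writing $X\leq_{cx}Y$ for the convex order (that is, $E\,\varphi(X)\leq E\,\varphi(Y)$ for every convex $\varphi$ for which the expectations exist), if $X\leq_{cx}Y$ and $E[X]=E[Y]$ then $X\leq_{L}Y$; in particular, for distributions with equal means the Lorenz order and the convex order coincide \citep{Arnold:87}. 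These reductions let one replace assertions about Lorenz curves by assertions about convex order, which is far easier to manipulate.

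For the monotonicity in $m$ with $k$ fixed, I would argue by binomial thinning. A negative binomial with fixed shape $k$ is closed under binomial thinning: thinning $\mathsf{NB}(k,m_2)$ with retention probability $p$ returns $\mathsf{NB}(k,pm_2)$, which follows at once from the mixture representation together with the thinning property of the Poisson law. So let $X_2\sim\mathsf{NB}(k,m_2)$, and let $Y$ be obtained from $X_2$ by binomial thinning with $p=m_1/m_2\in(0,1)$, so that $Y\sim\mathsf{NB}(k,m_1)$ on the same probability space and, conditionally on $X_2$, $Y$ is $\mathrm{Binomial}(X_2,p)$. Then $E[Y\mid X_2]=pX_2$, and the conditional Jensen inequality gives $pX_2\leq_{cx}Y$. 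Since $E[pX_2]=m_1=E[Y]$ and $pX_2$ has the same Lorenz curve as $X_2$, this yields $\mathsf{NB}(k,m_2)\leq_{L}\mathsf{NB}(k,m_1)$.

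The monotonicity in $k$ with $m$ fixed is the substantive part; here both laws have mean $m$, so it suffices to prove the convex order $\mathsf{NB}(k_2,m)\leq_{cx}\mathsf{NB}(k_1,m)$ for $k_1<k_2$, which I would split into two steps. Step one: the mixing laws are convex ordered, $\mathrm{Gamma}(k_2,m/k_2)\leq_{cx}\mathrm{Gamma}(k_1,m/k_1)$. This has equal means $m$, and I would verify it with the sign-change (cut) criterion: up to an additive constant the log-ratio of the two Gamma densities equals $(k_2-k_1)\bigl(\log x-x/m\bigr)$, which is strictly concave on $(0,\infty)$ and tends to $-\infty$ at both endpoints, so the difference of the two densities changes sign exactly twice with pattern $-,+,-$; since the means agree, this is exactly the cut criterion for the convex order. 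Step two: the Poisson mixing map preserves convex order, that is, for every convex $\psi$ the function $g(\lambda):=E\,\psi(N_\lambda)$ with $N_\lambda\sim\mathrm{Poisson}(\lambda)$ is convex; this follows from the Poisson differencing identity, which gives $g''(\lambda)=E\bigl[\psi(N_\lambda+2)-2\psi(N_\lambda+1)+\psi(N_\lambda)\bigr]\geq 0$ by the discrete convexity of $\psi$. Writing $\Lambda_i\sim\mathrm{Gamma}(k_i,m/k_i)$ and combining the two steps, for every convex $\psi$ one gets $E\,\psi\bigl(\mathsf{NB}(k_2,m)\bigr)=E\,g(\Lambda_2)\leq E\,g(\Lambda_1)=E\,\psi\bigl(\mathsf{NB}(k_1,m)\bigr)$, which is the claimed convex order, hence the Lorenz order. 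I expect the main obstacle to be Step one, securing the convex ordering of the Gamma mixing laws for non-integer shapes; a secondary nuisance is the regularity bookkeeping needed so that the differencing identity and the term-by-term differentiation in Step two are legitimate, which is routine since all negative binomial moments are finite (and can anyway be bypassed by truncating $\psi$).
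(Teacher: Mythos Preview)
Your proposal is correct and follows essentially the same route as the paper: binomial thinning for the monotonicity in $m$, and for the monotonicity in $k$ the Poisson--Gamma mixture representation combined with (i) the two-crossing (cut) criterion to order the Gamma mixing laws in convex order and (ii) convexity of $\lambda\mapsto E\,\psi(N_\lambda)$ to push that order through the Poisson mixture. The only cosmetic difference is that the paper cites \citet{Schweder:1982} for step (ii) whereas you supply the direct second-difference computation; your sign-pattern analysis $-,+,-$ for $g_2-g_1$ and the resulting direction $\Lambda_2\leq_{cx}\Lambda_1$ are stated more carefully than in the paper.
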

The proof is provided in the Appendix.

The above result explains why Gini and Hoover indices and 1 – prevalence are all decreasing functions of $m$ and $k$. Figure \ref{fig:2} also shows that that the contours of both the Gini and Hoover indices become parallel with the axes. This is due to the limiting behaviour of the negative binomial distribution. Depending on how the parameters are allowed to vary, it is known that the negative binomial distribution will converge to either a Poisson distribution or a gamma distribution \citep{AdlC:94}. Fixing $m$ and letting $k$ increase, the negative binomial distribution converges to a Poisson distribution with mean $m$. This causes the contour lines to become parallel with the horizontal axis as $k$ increases. Similarly, fixing $k$ and letting $m$ increase, an appropriately scaled negative binomial distribution converges to a Gamma distribution with shape and rate parameters both equal to $k$. Since the Gini and Hoover indices are scale invariant \citep[Section 3.1]{Arnold:87}, these indices approach their respective values for a Gamma ($k$, $k$) distribution as $m$ increases. This causes the contour lines to become parallel with the vertical axis as $m$ increases.

\section{Discussion}

In choosing the index to use to measure aggregation, those based on Lorenz curves seem to be the favoured, such as the Hoover and Gini.  The Gini returns closer values over a wider range of means, $k$ and prevalence compared to the Hoover, making differences less discernible.  The Hoover has a biological interpretation and may be easier to calculate.  When mean abundances are below one, the Hoover index has restricted values whereas the Gini has no such restriction, suggesting that Gini may be preferred in such a situation. Nevertheless, both indices provide a figure that seems to measure the same phenomenon, a phenomenon that is still undefined.

The contour graphs provide an easily interpreted demonstration of the effects of the various parameters on the Hoover and Gini indices.  These could be deduced by an analysis of the formulae used to calculate the indices but this is not straightforward; indices do not correlate with a particular parameter.  When applying an index to compare aggregation between samples or species, it is useful to know which parameter is having the greatest effect on the index.  The contour graphs provide the answer.

In producing the graphs we calculated indices directly from the parameters of the negative binomial distribution rather than using simulated data as done by \citet{MPF:2023}.  This obviated the need to consider the uncertainty of estimates and the effects of different sample sizes. Our results demonstrated the deterministic functional relationships between the aggregation indices, and the parameters, mean abundance and prevalence. The relationships were not linear indicating that correlation and principal components analysis may not be the best methods to analyse the relationships \citep{MPF:2023}.

Listing the advantages and disadvantages of Hoover and Gini indices, \citet[Table 2]{MPF:2023} describe them as having the disadvantages of being “strongly negatively correlated with prevalence” and “weakly negatively correlated with mean abundance.” In contrast, the $k$ parameter of the negative binomial distribution and patchiness are described as having the advantages of being “not necessarily correlated with mean abundance” and “only weakly correlated with prevalence.”  These comments ignore the fact that the negative binomial distribution, and hence any index computed on that distribution, is completely specified by the mean and prevalence. In other words, the dependence of any index on mean and prevalence is perfectly deterministic. In fact, the dependence on any pair of quantities that can be used to parameterise the negative binomial distribution, like $m$ and $k$ is perfectly deterministic.

\citet{MPF:2023} argue that the Gini index is to be preferred over the Hoover index on the basis that Hoover index equals one minus prevalence when the mean is less than or equal to one whereas the Gini index has no such restriction. To decide between the Hoover and Gini indices, if one must choose, then the relationship between these indices and $m$, $k$ and prevalence need to be considered more closely. Our contour plots (Fig. \ref{fig:2} \& \ref{fig:3}) have shown other differences in the behaviour of the Gini and Hoover indices. Compared to the Gini index, the Hoover index has a greater range over the region of values for $m$ and $k$ (or prevalence) observed in wild populations and has more regular dependence on prevalence. Given these properties and the Hoover index’s clear biological interpretation, we argue that the Hoover index should be preferred over the Gini index, at least when $m$ is greater than one.

Our analysis has used contour plots to examine how the Gini and Hoover indices are affected by changes in $m$, $k$, and prevalence. This approach could, in principle, be applied to construct contour plots from any three indices, provided two of these can be used to parameterise the negative binomial distribution. For example, one could construct a contour plot of the Gini index as a function of VMR and mean crowding as both $m$ and $k$ can be expressed in terms of VMR and mean crowding: 
\[
m = \text{mean crowding} - \text{VMR} + 1
\]
and
\[
k = \frac{\text{mean crowding}}{ \text{VMR} -1} -1.
\]
The contour plot could then be constructed using the expression for the Gini index in terms of $m$ and $k$ given in Section 2. Further application of contour plots may unravel other complex relationships in ecological parasitology.

\appendix

\section{Hoover index of the negative binomial distribution}

In parasitology the negative binomial distribution is usually parameterised in terms of the the mean $ m $ and $ k $. The probability mass function is then
\[
f(x;k,m) =  {k+x-1 \choose k-1}  \left(\frac{k}{k+m} \right)^{k} \left(\frac{m}{k+m} \right)^{x}, \quad x \in \mathbb{N}_{0},
\]
and we write $ \mathsf{NB}(k,m)$. Let $F(\cdot;k,m)$ denote the cumulative distribution function of the $ \mathsf{NB}(k,m)$ distribution. The first moment distribution of the $ \mathsf{NB}(k,m)$ distribution, $F^{(1)}(\cdot;k,m)$, is 
\[
F^{(1)}(x;k,m) = \frac{\sum_{y\leq x} y\, f(y;k,m)}{m}.
\]
For any non-negative integer $x$
\begin{align*}
    \frac{x\, f(x)}{m} & = \frac{x}{m} {k+x-1 \choose k-1}  \left(\frac{k}{k+m} \right)^{k} \left(\frac{m}{k+m} \right)^{x} \\
    & = \frac{x}{m}\frac{(k+x-1)!}{(k-1)! x!} \left(\frac{k}{k+m} \right)^{k} \left(\frac{m}{k+m} \right)^{x}  \\
    %& = x\frac{(k+x-1)!}{(k-1)! x!} \left(\frac{k}{k+m} \right)^{k} \frac{1}{m} \left(\frac{m}{k+m} \right)^{x}  \\
    %& = \frac{(k+x-1)!}{(k-1)! x!} \left(\frac{k}{k+m} \right)^{k} \frac{1}{k+m} \left(\frac{m}{k+m} \right)^{x-1}  \\
    & = \frac{(k+x-1)!}{k! (x-1)!} \left(\frac{k}{k+m} \right)^{k+1} \left(\frac{m}{k+m} \right)^{x-1} \\
    & = \frac{(k+x-1)!}{k! (x-1)!} \left(\frac{k(1+1/k)}{(k+m)(1+1/k)} \right)^{k+1} \left(\frac{m(1+1/k)}{(k+m)(1+1/k)} \right)^{x-1} \\
    %& = \frac{(k+1 + (x-1) -1)!}{k! (x-1)!} \left(\frac{k +1 }{k+1 + (m+m/k)} \right)^{k+1} \left(\frac{m+m/k}{k+1 +(m+m/k)} \right)^{x-1} \\
    & = {(k+1) + (x-1) +1 \choose (k+1) -1} \left(\frac{k +1 }{k+1 + (m+m/k)} \right)^{k+1} \left(\frac{m+m/k}{k+1 +(m+m/k)} \right)^{x-1},
\end{align*}
which is the probability mass function of the $ \mathsf{NB}(k+1,m + m/k)$ distribution evaluated at $x-1$. Hence,
\[
F^{(1)}(x;k,m)  = F(x-1;k+1,m+m/k).
\]
\citet[Lemma 5.3.3]{Arnold:87} states that the Hoover index can be expressed as
\[
H = F(m;k,m) - F^{(1)}(m;k,m).
\]
Hence,
\[
H=  F(m;k,m) - F(m-1;k+1, m+m/k).
\]

\section{Proof of Theorem \ref{Thm:lorenz1}}

We first recall the definition of convex order, which is closely related to the Lorenz order \citep[subsection 3.A.1]{SS:07}. \\

{\bf Definition:} For random variables $X$ and $Y$ such that $ \mathbb{E} \, \phi(X) \leq \mathbb{E} \, \phi(Y)$ for all convex functions $ \phi : \mathbb{R} \to \mathbb{R}$ for which the expectations exist. Then we say that $ X$ is smaller than $Y $ in the convex order, denoted $ X \leq_{\rm{cx}} Y$ 
\\

The convex order relates to the Lorenz order in the sense that 
\[
\frac{X}{\mathbb{E} X} \leq_{\rm{cx}} \frac{Y}{\mathbb{E} Y}
\]
if and only if $ X \leq_{L} Y$, provided the expectations exist \citep[equation 3.A.33]{SS:07}  or \citep[Corollary 3.2.1]{Arnold:87}.

\begin{proof}[Proof of Theorem \ref{Thm:lorenz1}]
For part (a), let $ X_{2} \sim \mathsf{NB}(k,m_{2}) $. Conditional on $X_{2}$, let $ X_{1} \sim \mathsf{Binomial}(X_{2},m_{1}/m_{2})$. Then $ X_{1} \sim \mathsf{NB}(k, m_{1}) $. As $\mathbb{E} (X_{1}|X_{2}) = (m_{1}/m_{2}) X_{2} $ and $ \mathbb{E}X_{1}  = (m_{1}/m_{2})\, \mathbb{E}X_{2}$, \citep[Theorem 3.4]{Arnold:87} implies $ (m_{1}/m_{2})\, X_{2} \leq_{L} X_{1}$. Since the Lorenz order is invariant under a change of scale, $ \mathsf{NB}(k,m_{2}) \leq_{L} \mathsf{NB}(k,m_{1}) $.

For part (b), standard conditioning arguments show that if $ (N,\ t\geq 0)$ is a standard Poisson process and $ \Lambda \sim \mathsf{Gamma}(\alpha,\beta)$ (Gamma distribution with shape parameter $\alpha$ and rate parameter $\beta$), then $ N_{\Lambda} \sim \mathsf{NB}(\alpha, \alpha/\beta) $. Let $ T_{i} \sim \mathsf{Gamma}(k_{i}, k_{i}/m) $. 

It is known that for every convex function $\phi$, $ \mathbb{E} \phi(N_{t}) $ is a convex it $t$ \cite[Proposition 2]{Schweder:1982}. If we can show that $ T_{1} \leq_{cx} T_{2}$, then the result will follow from \citet[Theorem 3.A.21]{SS:07}.

By construction $\mathbb{E} T_{1} = \mathbb{E} T_{2}$. Let $ g_{i}$ be the probability density function of $ \mathsf{Gamma}(k_{i}, \beta_{i}) $. Then $ T_{1} \leq_{cx} T_{2} $ if $g_{2} - g_{1} $ exhibits exactly two sign changes in the sequence  +,  -, + \cite[Theorem 3.A.44]{SS:07}. As the $\log $ function is increasing, $ \log g_{2}  - \log g_{1}$ has the same sequence of sign changes as $ g_{2} - g_{1}$. Then
\begin{align*}
    \lefteqn{\log g_{2}(x) - \log g_{1}(x)} & \\
    & = (k_{2}-1)\log x - \frac{k_{2}}{m} x - \left((k_{1} -1) \log x - \frac{k_{1}}{m} x \right) + C \\
    & = (k_{2} - k_{1}) \log x - \frac{(k_{1} - k_{2})}{m} x + C,
\end{align*}
where $ C$ is depends on $ k_{1}, k_{2} $ and $ m $ but not $ x$. There must be at least one sign change since both $g_{1}$ and $g_{2} $ integrate to one. For $ k_{2} > k_{1}$ this function is concave so there must be two sign changes. As this function is positive for $ x \to 0 $ and $ x \to \infty$ when $ k_{2} > k_{1}$ we have shown $ T_{1} \leq_{cx} T_{2}$.  This completes the proof.
\end{proof}

\end{document}